\newtheorem{theorem}{Theorem}
\newtheorem{definition}{Definition}
\theoremstyle{plain}
\newtheorem{remark}{Remark}
\newtheorem{cor}{Corollary}
\newcolumntype{L}{X}
\newcolumntype{R}{>{\raggedleft\arraybackslash}X}
\newcolumntype{C}{>{\centering\arraybackslash}X}
\newcommand{\todo}[1]{\fxnote{#1}}
\renewcommand{\todo}[1]{}
\DeclareFontFamily{U}{mathx}{\hyphenchar\font45}
\DeclareFontShape{U}{mathx}{m}{n}{
	<5> <6> <7> <8> <9> <10>
	<10.95> <12> <14.4> <17.28> <20.74> <24.88>
	mathx10
}{}
\DeclareSymbolFont{mathx}{U}{mathx}{m}{n}
\DeclareMathAccent{\widecheck}{0}{mathx}{"71}
\title{\textbf{Satisfiability Bounds for $\omega$-Regular Properties in Bounded-Parameter Markov Decision Processes}}
\author{Maximilian Weininger$^\ast$ \hspace{1em} Tobias Meggendorfer$^\ast$ \hspace{1em} Jan K{\v{r}}et\'insk\'y$^\ast$ \\
Department of Informatics, Technical University of Munich%
\thanks{$^\ast$All authors contributed equally to this work. It was funded in part by the German Research Foundation (DFG) project KR 4890/2-1 ``Statistical Unbounded Verification''.}%
\vspace{-1em}
}
\colorlet{disabled}{lightgray}
\tikzstyle{state}=[draw,rectangle,inner sep=5pt,rounded corners=2pt]
\tikzstyle{minstate}=[draw,rounded rectangle,inner sep=5pt]
\tikzstyle{maxstate}=[draw,rectangle,inner sep=5pt]
\tikzstyle{action}=[font=\normalsize,inner sep=0pt,outer sep=3pt]
\tikzstyle{actionnode}=[circle,draw=black,fill=black,minimum size=1mm,inner sep=0,outer sep=0]
\tikzstyle{actionedge}=[draw,-]
\tikzstyle{actiontext}=[font=\small,inner sep=0pt,outer sep=2pt]
\tikzstyle{prob}=[font=\normalsize,inner sep=0pt,outer sep=1pt]
\tikzstyle{probedge}=[draw,->]
\tikzstyle{probtext}=[font=\footnotesize,outer sep=2pt,inner sep=0pt]
\tikzstyle{directedge}=[draw,->]
\tikzset{chainarrow/.tip={Stealth[length=3pt]}}
\tikzset{>=chainarrow}
\tikzstyle{automatonstate}=[draw,chamfered rectangle]
\begin{document}

\maketitle
\thispagestyle{empty}
\pagestyle{empty}

\begin{abstract}
We consider the problem of computing minimum and maximum probabilities of satisfying an $\omega$-regular property in a bounded-parameter Markov decision process (BMDP).
BMDP arise from Markov decision processes (MDP) by allowing for uncertainty on the transition probabilities in the form of intervals where the actual probabilities are unknown.
$\omega$-regular languages form a large class of properties, expressible as, e.g., Rabin or parity automata, encompassing rich specifications such as linear temporal logic.
In a BMDP the probability to satisfy the property depends on the unknown transitions probabilities as well as on the policy.
In this paper, we compute the extreme values.
This solves the problem specifically suggested by Dutreix and Coogan in CDC 2018, extending their results on interval Markov chains with no adversary.
The main idea is to reinterpret their work as analysis of interval MDP and accordingly the BMDP problem as analysis of an $\omega$-regular stochastic game, where a solution is provided.
This method extends smoothly further to bounded-parameter stochastic games.
\end{abstract}
\section{Introduction} \label{sec:introduction}

Markov decision processes (MDP) are a classical formalism encompassing both probabilistic and non-deterministic features: in each state some actions are enabled and each action is assigned a distribution over the successor states.
In other words, each action corresponds to a set of transitions, each of which is assigned a transition probability.
\emph{Bounded-parameter MDP (BMDP)} \cite{GLD00} are like MDP, but to each transition is instead assigned an interval of possible transition probabilities.
Thus each BMDP specifies a set of MDP.
There are two interpretations of these intervals.
Firstly, in the \emph{uncertain} interpretation, the BMDP specifies MDP with unknown but constant transition probabilities in the intervals.
An MDP is thus derived from the BMDP by picking a value in each interval once for all.
Secondly, in the \emph{adversarial} interpretation, the BMDP specifies a decision process where the transition probabilities may be different numbers (in the intervals) every time we come to a state.
Each interpretation found its use.
The former can model, for instance, various degrees of uncertainty for each action or confidence intervals for the transition probabilities learnt from experience.
In this case there is one true transition probability, however unknown.
The latter can be used as a formalism for abstracting MDP: states with different outgoing transition probabilities can be abstracted into a single state with an interval covering all the values \cite{GLD00}.
In this case, the interval can stand for any of the values whenever visiting the state.
As such BMDP extend interval Markov chains (IMC) \cite{JL91,KU02} by an adversary (or an underspecified/non-deterministic controller).
The uncertain interpretation of IMC then yields uncertain Markov chains (UMC), while the adversarial interpretation of IMC yields interval MDP (IMDP), as distinguished in \cite{SVA06}.

\emph{$\omega$-regular} languages, e.g.~\cite{DBLP:reference/hfl/Thomas97}, form a robust class of rich specifications, which can be represented in various ways, e.g., by formulae of monadic second-order logic or by automata over infinite words.
In the setting of probabilistic systems, it is often advantageous to use deterministic Rabin automata (DRA) or their variations.
In particular, this class encompasses properties expressible in linear temporal logic (LTL) \cite{DBLP:conf/focs/Pnueli77} and there are efficient ways of translating LTL to DRA \cite{R4}.
Control of MDP with LTL specifications is widely studied, e.g.~\cite{TW86,KB08,WTM10,RDMM14,YTCBB12}, and typically uses the DRA representation.

In \cite{DC18}, Dutreix and Coogan argue for computing minimum and maximum probabilities of satisfying an $\omega$-regular property in an IMC interpreted as IMDP. 
In future work, they wish to apply the technique to solve the problem for BMDP, the controllable counterpart of IMC.
In this paper, we re-interpret their technique in a different light and using that perspective give a solution to BMDP, in both the uncertain and the adversarial understanding of the intervals.
We consider both the upper bound (also called \emph{design choice} of values in intervals \cite{DC18}) and the lower bound (
\emph{antagonistic} in \cite{DC18}).
We present the results for controllers that try to maximize the probability to satisfy the $\omega$-regular property; minimization is analogous as $\omega$-regular languages are closed under complement.

The main idea of our approach is to not only view the IMC as an IMDP, but also as an MDP, since an IMDP is a special case of MDP where the adversary chooses the transition probabilities.
We show the standard analysis on the respective MDP coincides with the tailored algorithm of \cite{DC18} applied to the IMC.
Our main contribution is taking this perspective on BMDP, yielding a stochastic game.
Since we can solve the stochastic game with an $\omega$-regular objective we can obtain also the solutions.
Moreover, for the upper bound, the two players play cooperatively and we can solve the problem in polynomial time using adapted MDP techniques.
Finally, we show how the game extension of IMC with two competing agents can be solved analogously to BMDP, this time without the need of introducing an additional agent.

\paragraph*{Further related work}
The general model of MDP with imprecise parameters (MDPIP) was introduced in \cite{SL73}.
BMDP \cite{GLD00} are then a~subclass where the parametrization is limited to independent intervals.
BMDP have been investigated with respect to various objectives, such as stochastic shortest path (minimum expected reward) \cite{B05}, expected total reward \cite{NG05,WK07,HHH+17}, discounted reward \cite{NG05,FDB14}, LTL \cite{WTM12}, PCTL \cite{PLSS13,HHT16}, or mean payoff 
\cite{TB07}.

The special non-controlled case of IMC has also been investigated for various objectives, e.g. PCTL \cite{SVA06,CK15}, LTL \cite{BLW13} (\cite{BDL+17} observes this algorithm may not converge to the optimum) or $\omega$-regular properties \cite{CSH08}.

Recent improvements include importance sampling techniques for IMC \cite{JWS18} or topological policy iteration for BMDP \cite{RBD19}.
IMC and BMDP are used as abstractions of systems in \cite{LAB15}.

\paragraph*{Organization of the paper}
After recalling the used formalisms in \cref{sec:preliminaries}, we state our problem in \cref{sec:state}.
We provide the solution in \cref{sec:contribution} and an illustrative case study (adjusted from \cite{DC18}) in \cref{sec:cs}.
\cref{sec:conclusion} concludes and presents ideas for future work.

\section{Preliminaries} \label{sec:preliminaries}
In this section, we recall basics of probabilistic systems and set up the notation.
As usual, $\Naturals$ refers to the natural numbers (including 0). 
A \emph{probability distribution} on a finite set $X$ is a mapping $\distribution : X \to [0,1]$, such that $\sum_{x\in X} \distribution(x) = 1$.
We use $\Distributions(X)$ to denote the set of all probability distributions on the set $X$.
Given some set $S$, we use $S^\star$ and $S^\omega$ to denote the set of all finite and infinite sequences comprising elements of $S$, respectively.

\begin{figure*}[t]
	\centering
	\subfloat[Example BMDP]{ \label{fig:model_examples:bmdp}
		\begin{tikzpicture}[auto,initial text=]
			\node[state,initial left] at (-0.5,0) (q0) {$q_0, x$};
			\node[actionnode] at (0.5,0) (q0a) {};
	
			\node[state] at (2,-1) (q1) {$q_1, y$};
			\node[actionnode] at (2,-1.7) (q1a) {};
			\node[actionnode] at (2.5,0) (q1b) {};
	
			\node[state] at (2,1) (q2) {$q_2, z$};
			\node[actionnode] at (0.5,1.5) (q2a) {};
	
			\path[actionedge]
				(q0) edge node[actiontext] {$a$} (q0a)
				(q1) edge[out=-60,in=0] node[actiontext] {$b$} (q1a)
				(q1) edge[in=180,out=90] node[actiontext] {$c$} (q1b)
				(q2) edge node[actiontext] {$d$} (q2a)
			;
	
			\path[probedge]
				(q0a) edge node[probtext,sloped,anchor=north,pos=0.5] {$[0.5, 1.0]$} (q1)
				(q0a) edge node[probtext,sloped,anchor=south,pos=0.5] {$[0.0, 1.0]$} (q2)
	
				(q1a) edge[out=-180,in=-120] node[probtext] {$[0.7, 1.0]$} (q1)
				(q1b) edge[out=0,in=30] node[probtext,sloped,anchor=north,pos=0.7] {$[0.25, 1.0]$} (q1)
				(q1b) edge[out=0,in=-30] node[probtext,sloped,anchor=south,pos=0.7] {$[0.5, 1.0]$} (q2)
	
				(q2a) edge[out=160,in=90,sloped] node[probtext,pos=0.5,anchor=south] {$[0.0, 1.0]$} (q0)
				(q2a) edge[out=160,in=100] node[probtext] {$[0.5, 1.0]$} (q2)
			;
		\end{tikzpicture}
	} %
	\subfloat[MDPs consistent with the BMDP]{ \label{fig:model_examples:mdp}
		\begin{tikzpicture}[auto,initial text=]
			\node[state,initial left] at (-0.5,0) (q0) {$q_0$};
			\node[actionnode] at (0.5,0) (q0a) {};
	
			\node[state] at (2,-1) (q1) {$q_1$};
			\node[actionnode] at (2,-1.7) (q1a) {};
			\node[actionnode] at (2.5,0) (q1b) {};
	
			\node[state] at (2,1) (q2) {$q_2$};
			\node[actionnode] at (0.5,1.5) (q2a) {};
	
			\path[actionedge]
				(q0) edge node[actiontext] {$a$} (q0a)
				(q1) edge[out=-60,in=0] node[actiontext] {$b$} (q1a)
				(q1) edge[in=180,out=90] node[actiontext] {$c$} (q1b)
				(q2) edge node[actiontext] {$d$} (q2a)
			;
	
			\path[probedge]
				(q0a) edge node[probtext,sloped,anchor=north] {$1.0$} (q1)
	
				(q1a) edge[out=-180,in=-120] node[probtext] {$1.0$} (q1)
				(q1b) edge[out=0,in=30] node[probtext,sloped,anchor=north,pos=0.6] {$0.5$} (q1)
				(q1b) edge[out=0,in=-30] node[probtext,sloped,anchor=south,pos=0.6] {$0.5$} (q2)
	
				(q2a) edge[out=160,in=90,sloped] node[probtext,anchor=south] {$0.5$} (q0)
				(q2a) edge[out=160,in=100] node[probtext] {$0.5$} (q2)
			;
		\end{tikzpicture} %
		\begin{tikzpicture}[auto,initial text=]
			\node[state,initial left] at (-0.5,0) (q0) {$q_0$};
			\node[actionnode] at (0.5,0) (q0a) {};
	
			\node[state] at (2,-1) (q1) {$q_1$};
			\node[actionnode] at (2,-1.7) (q1a) {};
			\node[actionnode] at (2.5,0) (q1b) {};
	
			\node[state] at (2,1) (q2) {$q_2$};
			\node[actionnode] at (0.5,1.5) (q2a) {};
	
			\path[actionedge]
				(q0) edge node[actiontext] {$a$} (q0a)
				(q1) edge[out=-60,in=0] node[actiontext] {$b$} (q1a)
				(q1) edge[in=180,out=90] node[actiontext] {$c$} (q1b)
				(q2) edge node[actiontext] {$d$} (q2a)
			;
	
			\path[probedge]
				(q0a) edge node[probtext,sloped,anchor=north] {$0.5$} (q1)
				(q0a) edge node[probtext,sloped,anchor=south] {$0.5$} (q2)
	
				(q1a) edge[out=-180,in=-120] node[probtext] {$1.0$} (q1)
				(q1b) edge[out=0,in=30] node[probtext,sloped,anchor=north,pos=0.6] {$0.25$} (q1)
				(q1b) edge[out=0,in=-30] node[probtext,sloped,anchor=south,pos=0.6] {$0.75$} (q2)

				(q2a) edge[out=160,in=100] node[probtext] {$1.0$} (q2)
			;
		\end{tikzpicture}
	} %
	\caption[]{
		Examples of an BMDP and its instantiations.
		The acceptance is $\RabinAcc = \{\rabinpair{q_2}{q_1}\}$.
	} \label{fig:model_examples}
	\vspace{-1.5em}
\end{figure*}

\subsection{Markov Decision Processes}
%
%
\begin{definition} \label{def:mdp}
	A \emph{Markov decision process (MDP)} is a tuple $\MDP = (\States, \initialstate, \Actions, \stateactions, \mdpTransitions, \RabinAcc)$, where
		$\States$ is a finite set of \emph{states},
		$\initialstate \in \States$ is the \emph{initial} state,
		$\Actions$ is a finite set of \emph{actions},
		$\stateactions: \States \to 2^{\Actions} \setminus \{\emptyset\}$ assigns to every state a non-empty set of \emph{available actions},
		$\mdpTransitions: \States \times \Actions \to \Distributions(\States)$ is a \emph{transition function} that for each state $s$ and (available) action $a \in \stateactions(s)$ yields a probability distribution over successor states, and
		$\RabinAcc \in 2^\States \times 2^\States$ is the \emph{Rabin acceptance}.
	Furthermore, for ease of notation we assume w.l.o.g. that actions are unique for each state, i.e.\ $\stateactions(s) \intersection \stateactions(s') = \emptyset$ for $s \neq s'$.\footnote{The usual procedure to achieve this in general is to replace $\Actions$ by $\States \times \Actions$ and to adapt $\stateactions$ and $\mdpTransitions$ appropriately.}
	An element $(\RabinFin_i, \RabinInf_i) \in \RabinAcc$ is called \emph{Rabin pair}.
	We assume w.l.o.g. that $\RabinFin_i \intersection \RabinInf_i = \emptyset$ for all pairs.
\end{definition}
In figures, we denote a Rabin pair $(\RabinFin, \RabinInf)$ by $\rabinpair{\RabinFin}{\,\RabinInf}$.

An MDP with $\cardinality{\stateactions(s)} = 1$ for all $s \in \States$ is called \emph{Markov chain (MC)}.
For ease of notation, we overload functions that map to distributions $f: Y \to \Distributions(X)$ by $f: Y \times X \to [0, 1]$, where $f(y, x) := f(y)(x)$.
For example, instead of $\mdpTransitions(s, a)(s')$ we write $\mdpTransitions(s, a, s')$ for the probability of transitioning from state $s$ to $s'$ using action $a$.

An \emph{infinite path} in an MDP is an infinite sequence $\infinitepath = s_0 a_0 s_1 a_1 \cdots$, such that $a_i \in \stateactions(s_i)$ and $\mdpTransitions(s_i,a_i, s_{i+1}) > 0$ for every $i \in \Naturals$.
We use $\infinitepath_i$ to refer to the $i$-th state $s_i$ in a given path.
A \emph{finite path} 
is a finite prefix of an infinite path.
$\Inf(\infinitepath) \subseteq \States$ denotes the set of all states which are visited infinitely often in the path $\infinitepath$.

A path $\infinitepath$ is \emph{accepted}, denoted $\infinitepath \accepted \RabinAcc$, if an only if there exists a Rabin pair $(\RabinFin_i, \RabinInf_i) \in \RabinAcc$ such that all states in $\RabinFin_i$ are visited \emph{finitely often}, i.e. $\RabinFin_i \intersection \Inf(\infinitepath) = \emptyset$, and at least one state of $\RabinInf_i$ is visited \emph{infinitely often}, i.e. $\RabinInf_i \intersection \Inf(\infinitepath) \neq \emptyset$.
We call such a Rabin pair \emph{accepting for $\infinitepath$}.

\begin{remark}
	Often, system and specification are modelled separately, e.g., by a labelled MDP together with a description of an $\omega$-regular property such as an LTL formula \cite{DBLP:conf/focs/Pnueli77} or an automaton.
	The common approach then is to build the product of the system, the BMDP, and an automaton describing the specification;
	this results in a system as described in \cref{def:mdp}.
	Since our work is largely independent of this construction's details we omit this step for simplicity.
	We highlight that indeed \cite{DC18} only refers to the product throughout the main body of their work.
	Details can be found in \cref{sec:appendix} and, e.g., \cite{BaierBook}.
\end{remark}

A \emph{policy} (also called \emph{controller}, \emph{strategy}) on an MDP is a function $\strategy : (\States \times \Actions)^*\times S \to \Distributions(\Actions)$ which given a finite path $\finitepath = s_0 a_0 s_1 a_1 \dots s_n$ yields a probability distribution $\strategy(\finitepath) \in \Distributions(\stateactions(s_n))$ on the actions to be taken next.
We call a policy \emph{memoryless randomized} (or \emph{stationary}) if it is of the form $\strategy: \States \to \Distributions(\Actions)$, and \emph{memoryless deterministic} (or \emph{positional}) if it is of the form $\strategy: \States \to \Actions$.
Later in the paper, we prove that positional strategies are indeed sufficient for all considered problems.
We denote the set of all policies of an MDP by $\Strategies$. 
By fixing the policy $\strategy$ in an MDP $\MDP$, we naturally obtain a MC and thus a probability measure $\Probability[\MDP]<\strategy>$ over potential runs \cite{puterman}.
Throughout this work, we are interested in finding policies which maximize the probability of accepting runs, i.e. $\sup_{\strategy \in \Strategies} \Probability[\MDP]<\strategy>[\infinitepath \accepted \RabinAcc]$.

An \emph{end component} in an MDP is a pair $(T, A)$ of a set of states $T$ and a set of actions $A$ such that the system \emph{can} remain within the states $T$ indefinitely, using only actions from $A$.
Formally, a pair $(T, A)$, where $\emptyset \neq T \subseteq \States$ and $\emptyset \neq A \subseteq \Union_{s \in T} \stateactions(s)$, is an end component of an MDP $\MDP$ if
	(i)~for all $s \in T, a \in A \intersection \stateactions(s)$ we have $\{s' \mid \mdpTransitions(s, a, s') > 0\} \subseteq T$, and 
	(ii)~for all $s, s' \in T$ there is a finite path $\finitepath = s a_0 \dots a_n s' \in (T \times A)^* \times T$, i.e.\ the path stays inside $T$ and only uses actions in $A$.
An end component $(T, A)$ is a \emph{maximal end component (MEC)} if there is no other end component $(T', A')$ such that $T \subseteq T'$ and $A \subseteq A'$.
The set of MECs of an MDP $\MDP$ is denoted by $\Mecs(\MDP)$ and can be obtained in polynomial time \cite{CY95}.
For further detail, see \cite[Sec.~10.6.3]{BaierBook}.

\subsection{Bounded-parameter Markov Decision Processes}

\begin{definition}
	A \emph{bounded-parameter Markov decision process (BMDP)} is a tuple $\BMDP = (\States, \initialstate, \Actions, \stateactions, \bmdpTransitionsLower, \bmdpTransitionsUpper, \RabinAcc)$, where
		$\States$, $\initialstate$, $\Actions$, $\stateactions$ and $\RabinAcc$ are as in the definition of MDP, and
		$\bmdpTransitionsLower, \bmdpTransitionsUpper: \States \times \Actions \times \States \to [0, 1]$ are \emph{lower and upper bounds} on the transition probability in each state.
	Again, we assume that actions are unique for each state.
	For consistency, we require that $\sum_{s' \in \States} \bmdpTransitionsLower(s, a, s') \leq 1 \leq \sum_{s' \in \States} \bmdpTransitionsUpper(s, a, s')$ for each state $s$ and action $a \in \stateactions(s)$.
\end{definition}
Given a BMDP $\BMDP = (\States, \initialstate, \Actions, \stateactions, \bmdpTransitionsLower, \bmdpTransitionsUpper, \RabinAcc)$, we call a Markov decision process $\MDP = (\States, \initialstate, \Actions, \stateactions, \mdpTransitions, \RabinAcc)$ \emph{consistent with $\BMDP$}, denoted $\MDP \in \BMDP$, if and only if $\MDP$'s transition probabilities satisfy $\BMDP$'s bounds, i.e. $\bmdpTransitionsLower(s, a, s') \leq \mdpTransitions(s, a, s') \leq \bmdpTransitionsUpper(s, a, s')$ for all states $s, s' \in \States$ and actions $a \in \stateactions(s)$.
Note that in general there are uncountably many MDPs consistent with a BMDP $\BMDP$.
A BMDP with $\cardinality{\stateactions(s)} = 1$ for all $s \in \States$ is called \emph{Interval Markov chain (IMC)} (e.g., \cite{CSH08}). 

See \cref{fig:model_examples:bmdp} for an example BMDP and \cref{fig:model_examples:mdp} for two MDPs consistent with this BMDP.

\subsection{Stochastic Games}

For our analysis, we additionally need the concept of \emph{stochastic games}.
These can be understood as an MDP where, instead of a single agent controlling the process, we have two antagonistic players.
Intuitively, the first player's aim is to obtain an accepted path, while the second player aims to stop the first player from doing so.
Each state in the game is \enquote{owned} by one of the two players and the owner of a particular state can decide which action to take in that state.
\begin{definition}
	A \emph{stochastic game (SG)} is a tuple $\SG = (\States, \initialstate, \Actions, \stateactions, \sgTransitions, \RabinAcc, \ownerFunction)$, where
		$(\States, \initialstate, \Actions, \stateactions, \sgTransitions, \RabinAcc)$ is an MDP and
		$\ownerFunction : \States \to \{\systemPlayer, \environmentPlayer\}$ is an \emph{ownership function}, assigning each state to either player $\systemPlayer$ or player $\environmentPlayer$.
		This naturally gives rise to the sets of states $\StatesSystem$ and $\StatesEnvironment$, which are controlled by the respective player.
\end{definition}
The definitions of (in)finite paths directly extend to stochastic games.
Policies are slightly modified, since each player can only make decisions in a part of the game.
Formally, we have two kinds of policies $\strategySystem : (\States \times \Actions)^* \times \StatesSystem \to \Distributions(\Actions)$ and $\strategyEnvironment : (\States \times \Actions)^* \times \StatesEnvironment \to \Distributions(\Actions)$, one for each player.
We denote the set of all policies of the respective players by $\StrategiesSystem$ and $\StrategiesEnvironment$.
Fixing the policy of a single player yields an MDP, denoted $\SG(\strategy_i)$; fixing both players' policies $\strategySystem$ and $\strategyEnvironment$ again yields an MC and measure over the set of runs, denoted $\Probability[\SG]<\strategySystem, \strategyEnvironment>$ \cite{Con92}.

\section{Problem Statement}\label{sec:state}

In this work, we are given a BMDP and want to control it such that the probability of an accepting run is maximized.
This raises two orthogonal questions:

Firstly, the semantics of the interval constraints have to be fixed.
We consider two different popular interpretations, called \emph{uncertain} and \emph{antagonistic}.
In the \emph{uncertain} (or \emph{design-choice}) model, an external environment fixes the transition probabilities once and for all, i.e. for a BMDP $\BMDP$ a particular consistent MDP $\MDP \in \BMDP$ is chosen.
In the \emph{antagonistic} model, the external environment instead is allowed to change the transition probabilities at every step, taking into account the full path so far.
These interpretations have been shown to be yield the same optima for reachability in interval Markov chains \cite{DBLP:journals/ipl/ChenHK13}.
In the following, we show that this also is the case for BMDP with Rabin objectives; hence we do not distinguish the semantics in our formal problem statement.

Secondly, it is not specified whether the aforementioned environment is cooperative or antagonistic.
We consider both of these two extreme cases.
In particular, we are interested in finding the maximal probability of acceptance while assuming that all transition probabilities are chosen (i)~to our liking and (ii)~as bad as possible.

Formally, given a BMDP $\BMDP$ we want to compute
\begin{enumerate}[(i)]
	\item $\probupper(\BMDP) = \sup_{\strategy \in \Strategies} \sup_{\MDP \in \BMDP} \Probability[\MDP]<\strategy>[\infinitepath \accepted \RabinAcc]$, and \label{item:upper_bound_interpretation}
	\item $\problower(\BMDP) = \sup_{\strategy \in \Strategies} \inf_{\MDP \in \BMDP} \Probability[\MDP]<\strategy>[\infinitepath \accepted \RabinAcc]$. \label{item:lower_bound_interpretation}
\end{enumerate}
Further, we are interested in the optimal policy and the best- / worst-case MDP consistent with the given BMDP, if it exists, i.e. the witnesses for the above values.

Case~(\ref{item:upper_bound_interpretation}) can be understood as a \enquote{design challenge}.
We are interested in building our system, i.e. finding an optimal assignment for all transitions, such that we maximize the probability of acceptance.
On the other hand, Case~(\ref{item:lower_bound_interpretation}) can be thought of as uncertainty about the real world or measurement imprecisions.
Here, we rather are interested in optimizing the worst case and want to find a safe strategy, able to reasonably cope with any concrete instantiation of the intervals.

Consider the situation depicted in \cref{fig:model_examples}.
We are interested in finding the upper and lower bounds for the BMDP given in \cref{fig:model_examples:bmdp}.
The upper bound, $1.0$, is exhibited by the left MDP of \cref{fig:model_examples:mdp}.
There, we end up in $q_1$ with probability $1$ and, by always playing action $b$, get an accepting run with probability $1$.
The right MDP shows the lower bound of $0.5$, since we get stuck in state $q_2$ with probability $0.5$.
Observe that if action $d$ in state $q_2$ had a non-zero probability of moving to $q_0$, the resulting run would be accepting with probability $1$, since we would almost surely eventually reach $q_1$.

In the following, we re-interpret existing approaches, and present our unified approach for solving BMDP.

\section{Solution approach} \label{sec:contribution}

In order to explain our approach, we first shed some light on the simpler case of interval Markov chains (IMC), handled in \cite{DC18}, and provide a different perspective on their approach.

In \cite{DC18}, the authors present a specialized algorithm for dealing with IMCs.
In essence, they compute maximal sets of (non)accepting states and then obtain the final value by solving a reachability query for these states.
We now provide a different viewpoint on their algorithm which will help us solve the more general case of BMDPs.

The key observation is the following.
We can view the intervals in an IMC as a player (the external environment) picking the transition probabilities at every step.
This can be understood as an MDP, where in every state the player has an action for each distribution satisfying the interval constraints.
This MDP has uncountably many actions in general, as there are infinitely many possibilities to choose the probabilities.
However, all possible distributions can be constructed as a convex combination of finitely many special cases which are \emph{basic feasible solutions} (BFS) of a linear program \cite{CSH08,DBLP:journals/tcs/HaddadM18} (also known as \emph{corner-point abstraction}).
We can identify each of these special solutions and use them to construct a finite MDP.
This MDP is often called \emph{interval Markov decision process} (IMDP); not to be confused with BMDP.
By model checking the IMDP, using established methods, we obtain the desired result for the IMC.

Indeed, the algorithm presented in \cite[Sec.~IV-C]{DC18} actually can be interpreted as a symbolic adaption of the standard model checking procedure for Rabin objectives on the MDP, namely an adapted MEC decomposition together with a reachability query \cite[Sec.~10.6.4]{BaierBook}.
Similar to the methods presented in \cite{DBLP:journals/tcs/HaddadM18}, their specialized algorithm cleverly avoids explicit computation of the exponentially large IMDP, achieving polynomial runtime.

Before we can extend these ideas to BMDP, we explain some details of the basic feasible solutions, since they are essential to our idea.

\subsection{Basic feasible solutions}\label{sec:bfs}

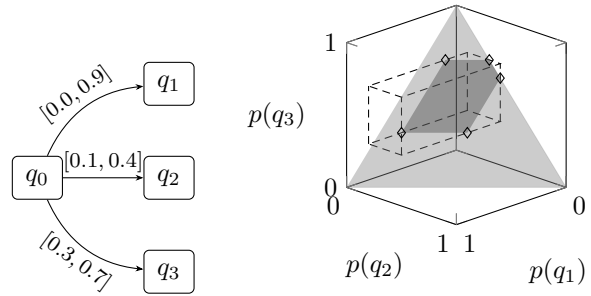
\begin{figure}
	\centering
	\subfloat[Example IMC]{
		\begin{tikzpicture}[auto,initial text=]
		\node[state] at (0,0) (q0) {$q_0$};
		\node[state] at (1.75,1.25) (q1) {$q_1$};
		\node[state] at (1.75,0) (q2) {$q_2$};
		\node[state] at (1.75,-1.25) (q3) {$q_3$};
		
		\path[probedge]
		(q0) edge[sloped, bend left, anchor=south] node[probtext] {$[0.0, 0.9]$} (q1)
		(q0) edge[sloped, anchor=south] node[probtext] {$[0.1, 0.4]$} (q2)
		(q0) edge[sloped, bend right, anchor=north] node[probtext] {$[0.3, 0.7]$} (q3)
		;
		\end{tikzpicture}
	} \quad %
	\subfloat[Visualized constraints.]{
		\begin{tikzpicture}[auto,initial text=]
		\begin{axis}[
		view={135}{20},
		axis lines=box,
		width=4.5cm,height=4.5cm,
		xtick={0,1.0},ytick={0,1.0},ztick={0,1.0},
		xmin=0,xmax=1,ymin=0,ymax=1,zmin=0,zmax=1,
		xlabel={$p(q_1)$},
		ylabel={$p(q_2)$},y label style={anchor=north,rotate=-90},
		zlabel={$p(q_3)$},z label style={rotate=-90}
		]
		\addplot3 [no marks,densely dashed] coordinates {(0.0,0.1,0.3) (0.9,0.1,0.3) (0.9,0.4,0.3) (0.0,0.4,0.3) (0.0,0.1,0.3)};
		\addplot3 [no marks,densely dashed] coordinates {(0.0,0.1,0.7) (0.9,0.1,0.7) (0.9,0.4,0.7) (0.0,0.4,0.7) (0.0,0.1,0.7)};
		
		\addplot3 [no marks,densely dashed] coordinates {(0.0,0.1,0.3) (0.0,0.1,0.7)};
		\addplot3 [no marks,densely dashed] coordinates {(0.9,0.1,0.3) (0.9,0.1,0.7)};
		\addplot3 [no marks,densely dashed] coordinates {(0.0,0.4,0.3) (0.0,0.4,0.7)};
		\addplot3 [no marks,densely dashed] coordinates {(0.9,0.4,0.3) (0.9,0.4,0.7)};
		
		\addplot3 [fill=gray, opacity=0.1, fill opacity=0.4] coordinates {(1.0,0.0,0.0) (0.0,1.0,0.0) (0.0,0.0,1.0)};
		
		\addplot3 [mark=diamond, fill=darkgray, draw=none, opacity=1.0, fill opacity=0.4] coordinates {(0.0,0.3,0.7) (0.2,0.1,0.7) (0.6,0.1,0.3) (0.3,0.4,0.3) (0.0,0.4,0.6)};
		\end{axis}
		\end{tikzpicture}
	}
	
	\caption{Visualization of the set of basic feasible solutions.
		The left picture shows a state together with its transition constraints in an IMC, the right picture depicts a geometric representation of the inequalities induced by the interval constraints.
		The light grey plane corresponds to the distribution constraints (\cref{enum:bfs_constraint:sum}), while the dashed box represents the interval constraints (\cref{enum:bfs_constraint:inequality}).
		Finally, the dark grey area shows the set of consistent distributions and the diamonds mark the basic feasible solutions.
	} \label{fig:bfs}
\end{figure}

Given an IMC $(\States, \initialstate, \bmdpTransitionsLower, \bmdpTransitionsUpper, \labelling)$ and a state $s \in \States$, we are interested in the set of all successor distributions $p \in \Distributions(\States)$ consistent with the constraints imposed by the IMC.
In particular, the following constraints have to be satisfied:
\begin{enumerate}[(I)]
	\item \label{enum:bfs_constraint:sum} $\sum_{s' \in \States} p(s') = 1$, and
	\item \label{enum:bfs_constraint:inequality} $\bmdpTransitionsLower(s, s') \leq p(s') \leq \bmdpTransitionsUpper(s, s')$ for all $s' \in \States$.
\end{enumerate}
Geometrically, \cref{enum:bfs_constraint:sum} constrains the solution set to a plane, while \cref{enum:bfs_constraint:inequality} bounds it in a box, as shown in \cref{fig:bfs}.
Since each point in the solution corresponds to a valid transition distribution and vice versa, we identify solutions of the constraints with their corresponding distributions.
Observe that the resulting solution set (and the corresponding set of distributions) is convex and any element of this set can be obtained by a convex combination of the corner-points, which are the basic feasible solutions.

There is one dimension per successor state and thus at most exponentially many basic feasible solutions.
This set can be computed in exponential time using standard theory of linear programs or simple geometric computations.
Due to lack of space we refer the reader to, e.g., \cite[Sec.~4]{DBLP:journals/tcs/HaddadM18} for further detail.

\subsection{Solving BMDP}\label{sec:evil}


When solving Rabin objectives on BMDPs, we observe one central problem:
The set of end components depends on the choice of intervals.
For example, consider the BMDP in \cref{fig:model_examples:bmdp}.
This BMDP comprises only one MEC containing all three states.
However, depending on the choice of probabilities, edges may be removed from the underlying graph, as for example in the right MDP of \cref{fig:model_examples:mdp}.
This MDP contains two MECs, namely $(\{q_1\}, \{b\})$ as well as $(\{q_2\}, \{d\})$.

In \cite{DC18}, a similar problem was encountered, as in IMC the set of \emph{bottom strongly connected components} can be modified by the choice of intervals.
The authors solved the problem implicitly by considering states where the Rabin objective is not satisfied as \enquote{leaky}, i.e. not part of any strongly connected component.

In contrast to that, our general solution relies on making the possible choices of the probabilities explicit.
We utilize the key observation that the non-determinism induced by intervals essentially corresponds to adding a player, who picks the probabilities for the intervals.
Thus, we reduce the BMDP to a stochastic game that can be solved by known model checking methods.
However, in the next section we introduce a more sophisticated approach inspired by the ideas of \cite{DC18}.

Intuitively, the reduction from BMDP to SG amounts to replacing every action in the BMDP with an additional state owned by the new player.
In this state, the new player can choose from the corresponding basic feasible solutions, allowing him to construct any consistent distribution over the successors.
An example of this construction is shown in \cref{fig:SG}, where the game constructed from the BMDP in \cref{fig:model_examples:bmdp} is depicted.
This new player can play against the system (yielding $\problower$) or cooperate (yielding $\probupper$).

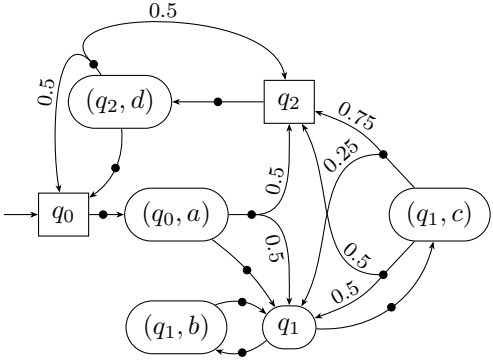
\begin{figure}
	\centering
	\begin{tikzpicture}[auto,initial text=]
	\node[maxstate,initial left] at (0,0) (q0) {$q_0$};
	\node[minstate] at (1.5,0) (q0a) {$(q_0, a)$};
	\node[actionnode] at (2.5,0) (q0a1) {};
	
	\node[minstate] at (3,-1.5) (q1) {$q_1$};
	\node[minstate] at (1.5,-1.5) (q1b) {$(q_1, b)$};
	\node[minstate] at (5,0) (q1c) {$(q_1, c)$};
	\node[actionnode] at (4.25,0.8) (q1c1) {};
	\node[actionnode] at (4.25,-0.8) (q1c2) {};
	
	\node[maxstate] at (3,1.5) (q2) {$q_2$};
	\node[minstate] at (0.75,1.5) (q2d) {$(q_2, d)$};
	\node[actionnode] at (0.4,2) (q2d1) {};
	
	\path[directedge]
		(q0) edge[pos=0.4] node[actionnode,anchor=center] {} (q0a)
		(q1) edge[bend left] node[actionnode,anchor=center] {} (q1b)
		(q1) edge[bend right=40] node[actionnode,anchor=center] {} (q1c)
		(q2) edge node[actionnode,anchor=center] {} (q2d)
		
		(q0a) edge[bend left=10] node[actionnode,anchor=center] {} (q1)
		(q1b) edge[bend left] node[actionnode,anchor=center] {} (q1)
		(q2d) edge[bend left] node[actionnode,anchor=center] {} (q0)
	;
	
	\path[actionedge]
		(q0a) edge (q0a1)
		(q1c) edge (q1c1)
		(q1c) edge (q1c2)
		(q2d) edge (q2d1)
	;
	
	\path[probedge]
		(q0a1) edge[in=90,out=0] node[probtext,sloped,anchor=north] {$0.5$} (q1)
		(q0a1) edge[in=-90,out=0] node[probtext,sloped,anchor=south] {$0.5$} (q2)
		
		(q1c1) edge[in=60,out=180,pos=0.2] node[probtext,sloped,anchor=south] {$0.25$} (q1)
		(q1c1) edge[in=-20,out=130] node[probtext,sloped,anchor=south] {$0.75$} (q2)
		
		(q1c2) edge[in=20,out=-130] node[probtext,sloped,anchor=south] {$0.5$} (q1)
		(q1c2) edge[in=-60,out=180,pos=0.2] node[probtext,sloped,anchor=south] {$0.5$} (q2)
		
		(q2d1) edge[in=100,out=135] node[probtext,sloped,anchor=south] {$0.5$} (q0)
		(q2d1) edge[in=100,out=135] node[probtext,sloped,anchor=south] {$0.5$} (q2)
	;
	\end{tikzpicture} %
	\caption{The stochastic game obtained from the BMDP in \cref{fig:model_examples:bmdp} by the construction in the proof of Theorem 2.
		States owned by the system are depicted by rectangles, while environment states have rounded corners.
		For readability, we omit all action labels.
		Furthermore, we omit action nodes for probability $1$ transitions.}
	\label{fig:SG}
\end{figure}
\begin{theorem}\label{stm:sg_equality}
	For every BMDP $\BMDP$, there exists an SG $\SG(\BMDP)$ such that 
	\begin{enumerate}[(i)]
		\item $\probupper(\BMDP) = \sup_{\strategySystem \in \StrategiesSystem} \sup_{\strategyEnvironment \in \StrategiesEnvironment} \Probability[\SG(\BMDP)]<\strategySystem,\strategyEnvironment>[\infinitepath \accepted \RabinAcc]$, and
		\item $\problower(\BMDP) = \sup_{\strategySystem \in \StrategiesSystem} \inf_{\strategyEnvironment \in \StrategiesEnvironment} \Probability[\SG(\BMDP)]<\strategySystem,\strategyEnvironment>[\infinitepath \accepted \RabinAcc]$
	\end{enumerate}
	
\end{theorem}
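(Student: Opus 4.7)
The plan is to explicitly construct $\SG(\BMDP)$ and then establish a value-preserving correspondence between the BMDP and the SG for both the cooperative and the adversarial regimes. For a BMDP $\BMDP = (\States, \initialstate, \Actions, \stateactions, \bmdpTransitionsLower, \bmdpTransitionsUpper, \RabinAcc)$, I define $\SG(\BMDP)$ with state space $\States \union \{(s,a) \mid s \in \States, a \in \stateactions(s)\}$, where original states are owned by the system player and the new $(s,a)$ states by the environment player. From a system state $s$, each available action $a \in \stateactions(s)$ deterministically moves to the environment state $(s,a)$; from $(s,a)$ the environment player picks one of the finitely many basic feasible solutions of the polytope $P(s,a)$ induced by the interval constraints of $\BMDP$ (see \cref{sec:bfs}), and the chosen BFS is used as the stochastic transition to $\States$. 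The Rabin acceptance lifts unchanged, since the intermediate states $(s,a)$ belong to neither $\RabinFin_i$ nor $\RabinInf_i$.

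The first step is a projection lemma: every infinite path $\infinitepath$ in $\SG(\BMDP)$ projects to an infinite path $\infinitepath'$ over $\States \times \Actions$ by collapsing each $s, (s,a)$ segment into a single step $s, a$, and $\infinitepath \accepted \RabinAcc$ in the SG iff $\infinitepath' \accepted \RabinAcc$ in the underlying BMDP. Next I prove the upper-bound correspondence as two inequalities. For $\probupper(\BMDP) \leq \sup_{\strategySystem}\sup_{\strategyEnvironment}\Probability[\SG(\BMDP)]<\strategySystem,\strategyEnvironment>[\infinitepath \accepted \RabinAcc]$, fix $\strategy \in \Strategies$ and $\MDP \in \BMDP$; I set $\strategySystem$ to mimic $\strategy$ at system states and $\strategyEnvironment$ to randomize, at each $(s,a)$, over BFS actions according to a convex-combination decomposition of $\mdpTransitions(s, a) \in P(s,a)$, which exists because $P(s,a)$ is the convex hull of its BFSs (\cref{sec:bfs}). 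The induced measure on projected paths then coincides with $\Probability[\MDP]<\strategy>$. For the converse inequality, any pair $(\strategySystem, \strategyEnvironment)$ in the SG induces at every finite history ending in $(s,a)$ a distribution in $P(s,a)$, which, together with $\strategySystem$, defines a history-dependent antagonistic schedule consistent with $\BMDP$; invoking the semantic equivalence of the uncertain and antagonistic interpretations for $\omega$-regular objectives on BMDP (promised in \cref{sec:state}) yields an $\MDP \in \BMDP$ with matching value.

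The lower-bound part is symmetric with an infimum over $\strategyEnvironment$. Given $\strategy$ and any $\MDP \in \BMDP$, the same BFS-mixing $\strategyEnvironment$ realizes $\MDP$'s distributions, giving $\problower(\BMDP) \leq \sup_{\strategySystem}\inf_{\strategyEnvironment}\Probability[\SG(\BMDP)]<\strategySystem,\strategyEnvironment>[\infinitepath \accepted \RabinAcc]$. Conversely, any $\strategyEnvironment$ induces a consistent antagonistic schedule, and by the uncertain-equals-antagonistic equivalence for the infimum side some fixed $\MDP \in \BMDP$ attains at most the SG value, which together with the corresponding $\strategy$ extracted from $\strategySystem$ yields the reverse inequality.

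The hard part will be the backward direction: the environment player in $\SG(\BMDP)$ may play a history-dependent mixed strategy over BFSs, corresponding to the \emph{antagonistic} view of the intervals, whereas $\probupper$ and $\problower$ range over a single fixed $\MDP \in \BMDP$ (the \emph{uncertain} view). Converting a history-dependent SG environment strategy into one consistent MDP with the same optimal value is precisely where the semantic equivalence for Rabin objectives must be used; establishing this equivalence (beyond the already-known reachability case \cite{DBLP:journals/ipl/ChenHK13}) is the main technical work, while the BFS decomposition together with the projection lemma makes the SG-to-antagonistic and BMDP-to-SG directions routine.
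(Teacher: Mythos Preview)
Your construction of $\SG(\BMDP)$ and the forward direction (mapping a pair $(\strategy,\MDP)$ to a pair $(\strategySystem,\strategyEnvironment)$ via BFS decomposition, together with the projection lemma) match the paper. The problem is the backward direction. You repeatedly invoke ``the semantic equivalence of the uncertain and antagonistic interpretations for $\omega$-regular objectives on BMDP (promised in \cref{sec:state})'' to convert an arbitrary, history-dependent environment strategy $\strategyEnvironment$ into a single consistent $\MDP \in \BMDP$ with the same value. But in this paper that equivalence is not an input available before \cref{stm:sg_equality}; it is derived \emph{from} \cref{stm:sg_equality} in the corollary immediately following it (via positional determinacy of Rabin stochastic games~\cite{CH12}). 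Your proposal therefore has a circular dependency. You partly acknowledge this at the end (``establishing this equivalence \dots\ is the main technical work''), but you give no method for doing so, so the argument is either circular or incomplete at exactly the point you yourself flag as hard.

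The paper sidesteps this by not going through the uncertain-versus-antagonistic detour at all. It asserts a direct two-way correspondence: every consistent $\MDP$ is induced by some (stationary) $\strategyEnvironment$ obtained by randomising over $\BFS(a)$, \emph{and vice versa}, so the two families of achievable probabilities coincide for every $\strategySystem$; the nontrivial ``vice versa'' part is not reproved but delegated to \cite[Thm.~8]{CSH08} for IMCs, with the claim that the BMDP case is completely analogous (the BMDP controller is just an extra, orthogonal layer of nondeterminism). If you want a self-contained argument, the honest route is to adapt that IMC proof directly---or to prove positional sufficiency for the environment in $\SG(\BMDP)$ first and then read off a fixed consistent $\MDP$ from the positional $\strategyEnvironment$---rather than to assume the semantic equivalence that the theorem is ultimately used to justify.
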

\begin{proof}
	Let $\BMDP = (\States, \initialstate, \Actions, \stateactions, \bmdpTransitionsLower, \bmdpTransitionsUpper, \labelling)$ be a BMDP.
	In contrast to IMC, where we found the basic feasible solutions for a state, in BMDP we have to consider state-action pairs.
	Recall that we assumed that each action belongs to a single state.
	Hence, each $a \in \Actions$ induces a set of basic feasible solutions, given by the constraints $\bmdpTransitionsLower(s, a)$ and $\bmdpTransitionsUpper(s, a)$, where $s$ is the unique state with $a \in \stateactions(s)$.
	We denote this set by $\BFS(a)$, and it can be computed as described in \cref{sec:bfs}.
	Recall that any $p \in \BFS(a)$ corresponds to a distribution over states.

	The SG $\SG(\BMDP) = (\States', \initialstate, \Actions', \stateactions', \sgTransitions, \RabinAcc, \ownerFunction)$ is constructed from $\BMDP$ as follows.
	\begin{itemize}
		\item $\States' = \States \union \{(s, a) \mid s \in \States \land a \in \stateactions(s)\}$,
		\item $\Actions' = \Actions \union \{(a, p) \mid a \in \Actions, p \in \BFS(a)\}$, and
		\item $\ownerFunction(s) = \systemPlayer$ if $s \in \States$ and $\environmentPlayer$ otherwise, i.e. all newly introduced states $(s, a)$ belong to the environment.
		\item For every old state $s \in \States$ we set
		\begin{itemize}
			\item $\stateactions'(s) = \stateactions(s)$ and
			\item $\sgTransitions(s, a, (s, a)) = 1$ (and 0 for all other states) for all actions $a \in \stateactions(s)$.
		\end{itemize}
		\item For every new state $(s, a) \in \States' \setminus \States$ we set
		\begin{itemize}
			\item $\stateactions'((s, a)) = \{(a, p) \mid p \in \BFS(a)\}$ and 
			\item $\sgTransitions((s, a), (a, p), s') = p(s')$ for all $p \in \BFS(a)$.
		\end{itemize}
	\end{itemize}
	For any consistent MDP $\MDP \in \BMDP$, there exists a policy for the other player $\strategyEnvironment$ inducing the transition function of $\MDP$, i.e. for all $\strategySystem \in \StrategiesSystem$ we have $\Probability[\MDP]<\strategySystem>[\infinitepath \accepted \RabinAcc] = \Probability[\SG(\BMDP)]<\strategySystem, \strategyEnvironment>[\infinitepath \accepted \RabinAcc]$, and vice versa.
	This can be proven completely analogously to the proof for IMC, see, e.g., \cite[Thm.~8]{CSH08}.
	Intuitively, randomizing over the BFS exactly yields the set of valid distributions.
	This immediately yields the desired equality.
\end{proof}

\begin{cor}
	Positional policies suffice to achieve optimal solutions in BMDP.
	Consequently, the uncertain and antagonistic semantics are equivalent for the optima.
\end{cor}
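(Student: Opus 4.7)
The plan is to lift the corollary directly from Theorem~2 by invoking known positional determinacy results for $\omega$-regular stochastic games. By Theorem~2, for every BMDP $\BMDP$ the values $\probupper(\BMDP)$ and $\problower(\BMDP)$ coincide with, respectively, the cooperative and the adversarial value of the game $\SG(\BMDP)$ with a Rabin winning condition. Stochastic games with Rabin objectives are positionally determined for both players: the cooperative case reduces to an MDP with Rabin objective, for which positional strategies are optimal (see, e.g., \cite[Sec.~10.6.4]{BaierBook}), and the two-player zero-sum case admits optimal memoryless deterministic strategies for both players by the classical result of Chatterjee et al. on $\omega$-regular stochastic games.

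First I would fix optimal positional strategies $\strategySystem^\star \in \StrategiesSystem$ and $\strategyEnvironment^\star \in \StrategiesEnvironment$ in $\SG(\BMDP)$ for the respective problem. Since every state of $\SG(\BMDP)$ of the form $s \in \States$ has a single outgoing transition (to the intermediate state $(s,a)$ for the action $a$ chosen by $\strategySystem^\star$), the projection of $\strategySystem^\star$ to the original state space is a positional policy on $\BMDP$. This yields the first half of the claim.

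For the equivalence of semantics, I would translate the positional environment strategy $\strategyEnvironment^\star$ back to the BMDP level. Because $\strategyEnvironment^\star$ is positional, in each intermediate state $(s,a)$ the environment picks a fixed distribution $\mu_{s,a} \in \Distributions(\BFS(a))$, independent of history. Averaging out this distribution over the basic feasible solutions produces a single distribution $\mdpTransitions(s,a) = \sum_{p \in \BFS(a)} \mu_{s,a}(p)\cdot p$ on $\States$, which is a convex combination of BFSs and thus itself satisfies the interval constraints by \cref{sec:bfs}. Consequently $(\mdpTransitions(s,a))_{s,a}$ defines a single MDP $\MDP^\star \in \BMDP$, and $\strategyEnvironment^\star$ on $\SG(\BMDP)$ induces exactly the same measure on paths as the once-and-for-all instantiation $\MDP^\star$. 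Hence the adversarial (respectively cooperative) value of $\SG(\BMDP)$ is already attained in the uncertain semantics, giving
\[
	\sup_{\strategy} \inf_{\MDP \in \BMDP} \Probability[\MDP]<\strategy>[\infinitepath \accepted \RabinAcc] \;\geq\; \problower(\BMDP),
\]
and analogously for $\probupper$; the reverse inequalities are immediate because every instantiation $\MDP \in \BMDP$ can be simulated by a (positional) environment strategy in the game.

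The only subtlety, and the step I would spell out carefully, is the convex-combination argument in the previous paragraph: one must check that taking a $\mu_{s,a}$-randomized choice of a BFS and then sampling the resulting distribution is probabilistically indistinguishable from sampling the combined distribution $\mdpTransitions(s,a)$ at once, so that the induced path measure is unchanged. This is a direct application of the law of total probability and mirrors exactly the IMC-to-IMDP argument of \cite[Thm.~8]{CSH08} already cited in the proof of Theorem~2.
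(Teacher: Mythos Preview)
Your proposal is correct and follows essentially the same route as the paper: invoke positional determinacy for stochastic Rabin games (the paper cites \cite{CH12}), pull the optimal positional strategies back through the reduction of Theorem~2, and observe that a positional environment strategy amounts to a fixed choice of transition probabilities, so the uncertain and antagonistic optima coincide. Your write-up is considerably more explicit than the paper's three-sentence proof---in particular you spell out the convex-combination step and the projection of $\strategySystem^\star$ to the original state space---but the underlying argument is the same.
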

\begin{proof}
	Positional policies are sufficient for Rabin objectives in SG~\cite{CH12}, and thus by the reduction of \cref{stm:sg_equality} also for BMDP.
	Hence the best policy for choosing the intervals is positional both for maximization and minimization, and there is no benefit in switching.
\end{proof}
\begin{remark}
	This result relies on the fact that the objective $\RabinAcc$ is already part of the BMDP.
	See Appendix~\ref{sec:appendix:semantics} for further discussion.
\end{remark}
To solve $\omega$-regular objectives for SGs, we use the strategy improvement algorithm presented in \cite{CH06} and the reachability algorithm for MDPs from \cite{BaierBook}, which yields both the maximum and minimum probability, as well as the optimal controller.
Given a BMDP $\BMDP$, our procedure works as follows:
\begin{enumerate}
	\item Lower bound:
	\begin{enumerate}
		\item Use \cite[Alg.~2]{CH06} to solve $\SG(\BMDP)$ with the given Rabin objective $\RabinAcc$.
		This yields the optimal player 1 policy $\strategySystem$, which induces an MDP $\MDP_{\strategySystem}$.
		Note that the actions of the system, the original non-determinism of the BMDP, are fixed in $\MDP_{\strategySystem}$, and the remaining non-determinism belongs to the environment player choosing the intervals.\label{step:wcRabin}
		\item Compute the minimum reachability probability with the algorithm from \cite[Sec.~10.6.4]{BaierBook}, to obtain the the value $\problower$ and the worst-case environment policy $\strategyEnvironment$ for $\MDP_{\strategySystem}$, and thus the worst-case instantiation of the intervals.
	\end{enumerate}
	\item Upper bound:
	\begin{enumerate}
		\item Let $\MDP(\BMDP)$ be the MDP obtained by assigning all nodes in $\SG(\BMDP)$ to player $\systemPlayer$.\label{step:MM}
		\item Use standard model checking procedures to obtain the best policy $\strategy$ for $\MDP(\BMDP)$ and the value $\probupper$; the policy handles both the non-determinism of the system and the instantiation of the intervals.
	\end{enumerate}
\end{enumerate}

\begin{theorem}
	This procedure correctly computes
	\begin{enumerate}
		\item $\problower(\BMDP)$, the optimal policy $\strategySystem$ to achieve it and the worst-case MDP, and
		\item $\probupper(\BMDP)$, and $\strategy$, describing the best-case system controller as well as the best choice of intervals.
	\end{enumerate}
	It terminates in time exponential in the size of the $\SG(\BMDP)$, which in turn is exponential in the size of $\BMDP$.
\end{theorem}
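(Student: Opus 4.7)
The plan is to handle the two parts of the theorem separately, in each case reducing the problem to an instance already solved by \cref{stm:sg_equality} and then invoking standard model-checking machinery.

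For the lower bound, my approach is to first apply \cref{stm:sg_equality}~(ii) to rewrite $\problower(\BMDP)$ as the antagonistic value of the Rabin game on $\SG(\BMDP)$. Since Rabin objectives admit positional optimal strategies for the maximizer in SGs, the strategy-improvement procedure of \cite{CH06} terminates with a positional $\strategySystem^*$ attaining this value. I would then observe that fixing $\strategySystem^*$ in $\SG(\BMDP)$ yields an MDP $\MDP_{\strategySystem^*}$ in which only the environment acts, and that $\inf_{\strategyEnvironment} \Probability[\MDP_{\strategySystem^*}]<\strategyEnvironment>[\infinitepath \accepted \RabinAcc] = \problower(\BMDP)$ by optimality of $\strategySystem^*$. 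The remaining step is to justify that the algorithm from \cite[Sec.~10.6.4]{BaierBook}, preceded by an MEC decomposition together with the identification of the MECs in which the Rabin condition can be ensured, computes this infimum along with an optimal positional environment policy $\strategyEnvironment^*$. Decoding $\strategyEnvironment^*$ — each of its choices selects one basic feasible solution at an $(s,a)$-state — yields a concrete MDP $\MDP \in \BMDP$ realising $\problower$.

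For the upper bound, my plan is to exploit that the two suprema in \cref{stm:sg_equality}~(i) coalesce into a single maximization over joint strategies. Reassigning every environment state in $\SG(\BMDP)$ to player $\systemPlayer$ to form $\MDP(\BMDP)$ therefore does not change the optimum: policies of $\MDP(\BMDP)$ are in measure-preserving bijection with strategy pairs of $\SG(\BMDP)$. After making this bijection precise, the task reduces to standard maximum Rabin satisfaction on an MDP, for which textbook algorithms return both $\probupper(\BMDP)$ and an optimal policy $\strategy$; splitting $\strategy$ along the state types of $\MDP(\BMDP)$ separates the system's non-deterministic choices from the best-case interval selection.

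For the complexity claim, the straightforward part is to observe that $\SG(\BMDP)$ has polynomially many states but up to $2^{|\States|}$ actions at each new state (the BFS enumeration of \cref{sec:bfs}), giving $|\SG(\BMDP)| \in 2^{\mathcal{O}(|\BMDP|)}$. The MDP Rabin step used for the upper bound runs in polynomial time in $|\MDP(\BMDP)|$, hence exponentially in $|\BMDP|$. For the lower bound, the strategy-improvement call of \cite[Alg.~2]{CH06} dominates and matches the claimed bound \emph{exponential in} $|\SG(\BMDP)|$, while the subsequent minimum-reachability pass on $\MDP_{\strategySystem^*}$ is polynomial in $|\SG(\BMDP)|$ and therefore absorbed. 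I expect the main obstacle to be precisely the second sub-step of the lower bound: verifying that, after globally fixing $\strategySystem^*$, min-Rabin on $\MDP_{\strategySystem^*}$ is captured by a single minimum-reachability query as described. This needs care because the accepting MECs must be identified inside $\MDP_{\strategySystem^*}$ and may differ from the MECs of $\SG(\BMDP)$ before $\strategySystem^*$ is imposed, so one must argue that the MECs exposed under $\strategySystem^*$ are exactly the ones relevant to the adversarial min-Rabin value.
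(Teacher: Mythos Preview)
Your proposal is correct and follows essentially the same route as the paper: invoke \cref{stm:sg_equality} to pass to $\SG(\BMDP)$, appeal to \cite[Thm.~3]{CH06} for the antagonistic Rabin game and to \cite[Sec.~10.6.4]{BaierBook} for the residual MDP analysis, and read off the complexity from the BFS blow-up plus the strategy-improvement bound. Your write-up is considerably more detailed than the paper's two-line proof; in particular, the concern you flag about the min-Rabin step on $\MDP_{\strategySystem^*}$ is handled in the paper simply by deferring to \cite[Sec.~10.6.4]{BaierBook}, which already packages the accepting-MEC identification together with the reachability computation, so no additional argument is needed there.
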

\begin{proof}
	The correctness follows directly from \cref{stm:sg_equality} and the correctness of the the used algorithms \cite[Thm.~3]{CH06}, \cite[Sec.~10.6.4]{BaierBook}.
	The complexity is dominated by the computation of the game $\SG(\BMDP)$ and its solution process.
	The SG is exponential in the size of the BMDP (see \cref{sec:bfs}) and the solution algorithm takes time exponential in the size the game \cite[Thm.~3]{CH06}.
\end{proof}


\begin{remark}
	We can immediately apply the presented methods to \enquote{bounded-parameter stochastic games}, i.e. stochastic games with transition probability intervals.
	In particular, we do not need to introduce another player, as the states added by the construction in the proof of \cref{stm:sg_equality} can be assigned to one of the existing players -- player 1 in the uncertain and player 2 in the antagonistic setting.
	Thus the system remains a stochastic game.
\end{remark}

\subsection{Improving computation of upper bounds} \label{sec:design}

Note that for the computation of the upper bounds, we did not make use of the second player, but instead only introduced new states for the existing player, yielding an exponentially large MDP.
By adapting observations from \cite{DC18}, we can improve on this algorithm by directly analysing the MC and avoiding the explicit construction of the SG, yielding a polynomial time algorithm for computing the upper bound.
%

Intuitively, the improved procedure symbolically identifies for each Rabin pair in $(\RabinFin, \RabinInf) \in \RabinAcc$ the winning end components.
It does so by computing MECs while temporarily excluding states in $\RabinFin$.
After obtaining all states that are in a winning end-component for some pair, we compute the probability to reach any of these states.
The structure of this improved procedure is similar to \cite[Alg.~1]{DC18}, and relies on methods to compute the MEC decomposition on a BMDP from \cite{DBLP:journals/tcs/HaddadM18} and the reachability algorithm from \cite{PLSS13}.\footnote{In \cite{DBLP:journals/tcs/HaddadM18} and \cite{PLSS13}, the authors refer to BMDP as \enquote{IMDP}.}


Given a BMDP $\BMDP$, our procedure works as follows:
\begin{enumerate}
	\item For each Rabin pair $(\RabinFin_i, \RabinInf_i) \in \RabinAcc$: \label{alg:cooperative:rabin}
	\begin{enumerate}
		\item Construct a modified copy $\BMDP_i$ of $\BMDP$ where all $(s, q)$ with $q \in \RabinFin_i$ absorbing, i.e. all outgoing transitions are replaced with a self-loop. \label{alg:cooperative:absorbing}
		\item Compute the MEC decomposition of the resulting BMDP using \cite[Alg.~3]{DBLP:journals/tcs/HaddadM18}. \label{alg:cooperative:rabin_mecs}
		\item If a MEC $(T, A)$ has a non-empty intersection with $\RabinInf_i$, then it is winning and all states $T$ are added to the set of winning states $W$. \label{alg:cooperative:add_winning}
	\end{enumerate}
	\item Compute the maximal probability of reaching $W$, using the methods presented in, e.g., \cite{PLSS13}.
	\label{alg:cooperative:reach}
\end{enumerate}
\begin{theorem}
	This procedure correctly computes $\probupper(\BMDP)$ and terminates in polynomial time.
\end{theorem}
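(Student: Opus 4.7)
The plan is to show that this procedure computes the same value as the cooperative solution of the stochastic game $\SG(\BMDP)$, which by \cref{stm:sg_equality} equals $\probupper(\BMDP)$. I would establish this in three steps, mirroring the classical analysis of Rabin objectives on MDPs (\cite[Sec.~10.6.4]{BaierBook}) but using the BMDP-native primitives from \cite{DBLP:journals/tcs/HaddadM18} and \cite{PLSS13} in place of their MDP counterparts.

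First, I would argue that for each Rabin pair $(\RabinFin_i, \RabinInf_i) \in \RabinAcc$ the absorbing construction in Step~1(a) correctly isolates end components that avoid $\RabinFin_i$ forever. Since all $\RabinFin_i$-states are turned into sinks, every MEC in $\BMDP_i$ is either contained in $\States \setminus \RabinFin_i$ (in which case it is exactly an $\RabinFin_i$-free end component of the original BMDP) or is a single absorbing sink, which trivially fails the $\RabinInf_i$ test in Step~1(c). Thus $W$ is exactly the union, over all $i$, of $\RabinFin_i$-free end components meeting $\RabinInf_i$, which is the BMDP analogue of the accepting MEC union in the MDP setting.

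Second, I would reduce Rabin acceptance to reachability of $W$. For soundness, once a winning MEC $(T, A)$ is reached under a cooperative interval instantiation and a compatible strategy, one can choose inside $T$ a strategy and an instantiation that visits every state of $T$ — and hence some state of $\RabinInf_i$ — infinitely often almost surely while avoiding $\RabinFin_i$; this yields acceptance with probability one from any state in $W$. Conversely, every accepting run must from some point on remain in an end component that witnesses acceptance for some pair; such a component is $\RabinFin_i$-free and meets $\RabinInf_i$, hence is contained in some MEC produced by Step~1. So $\probupper(\BMDP)$ equals the cooperative maximum reachability probability of $W$, which Step~2 computes via the algorithm of \cite{PLSS13}.

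The main obstacle is justifying the interplay between MECs and the cooperative choice of intervals: in a BMDP, whether $(T, A)$ is an end component depends on which transitions are assigned positive probability, and this in turn depends on the interval instantiation. Concretely, I must verify that the decomposition of \cite[Alg.~3]{DBLP:journals/tcs/HaddadM18} returns exactly those pairs $(T, A)$ that are simultaneously realizable as an end component in some consistent MDP $\MDP \in \BMDP_i$, and that for any such $(T, A)$ meeting $\RabinInf_i$ there is a single joint choice of interval values and strategy achieving the probability-one loop inside $T$. Once this is made explicit, polynomial complexity follows at once: Step~1 invokes the polynomial-time MEC decomposition once per Rabin pair, and Step~2 is a single polynomial cooperative reachability computation, so the overall runtime is polynomial in the size of $\BMDP$ and $\RabinAcc$.
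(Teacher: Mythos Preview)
Your proposal is correct and follows essentially the same route as the paper: identify winning end components per Rabin pair via the BMDP MEC decomposition of \cite{DBLP:journals/tcs/HaddadM18}, then reduce to cooperative reachability via \cite{PLSS13}, invoking \cref{stm:sg_equality} for correctness and the polynomial bounds of the cited primitives for complexity. The only difference in framing is that the paper resolves your ``main obstacle'' by citing \cite[Prop.~6]{DBLP:journals/tcs/HaddadM18} to equate the BMDP MECs directly with the MECs of $\SG(\BMDP)$ (equivalently $\MDP(\BMDP)$), after which the soundness/completeness argument you spell out becomes the standard MDP Rabin-to-reachability reduction of \cite[Sec.~10.6.4]{BaierBook}; your more explicit ``realizable in some consistent $\MDP \in \BMDP_i$'' formulation is equivalent but the paper's shortcut via the game avoids re-arguing the loop-inside-$T$ step.
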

\begin{proof}
	By \cite[Prop.~6]{DBLP:journals/tcs/HaddadM18}, we get that the MEC computation of a BMDP $\BMDP$ through \cite[Alg.~3]{DBLP:journals/tcs/HaddadM18} is correct, i.e. the computed MECs correspond to the MECs of $\SG(\BMDP)$.
	The MECs identified as winning in Step~\ref{alg:cooperative:rabin_mecs} and \ref{alg:cooperative:add_winning} thus indeed are winning MECs in $\MDP(\BMDP)$, where $\MDP(\BMDP)$ is as in Step \ref{step:MM} of the procedure of \cref{sec:evil}.
	Consequently, we exactly identify the set of potentially winning states.
	Finally, the correctness of the algorithms used to compute the reachability in Step~\ref{alg:cooperative:reach} yields the overall correctness.
	
	For the complexity, observe that each step requires at most polynomial time (\cite[Prop.~6]{DBLP:journals/tcs/HaddadM18} for Step~\ref{alg:cooperative:rabin_mecs}, \cite[Thm.~4.1]{PLSS13} for Step~\ref{alg:cooperative:reach}) and there are only linearly many Rabin pairs in $\RabinAcc$.
\end{proof}

\begin{remark}
	Note that in the setting of \cite{DC18} a procedure for computing the upper bound is sufficient, as the lower bound can be computed as 1 minus the maximal probability to satisfy the negation of the specification.
	However, this idea is not applicable in the BMDP setting due to the alternation of $\sup$ and $\inf$ in the definition of $\problower$.
\end{remark}

\begin{remark}
	Our algorithm can easily be extended to use \emph{generalized Rabin transition acceptance}, allowing for efficient practical implementation.
	See \cite{DBLP:conf/cav/ChatterjeeGK13} for details.
\end{remark}

%
%
%
%

\section{Case Study}\label{sec:cs}

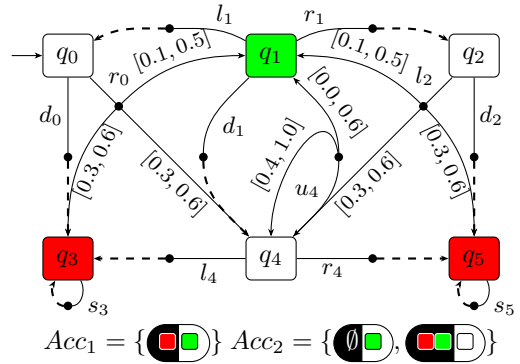
\begin{figure}[t]
	\centering
	\begin{tikzpicture}[auto,initial text=,scale=0.9]
		\node[state,initial left] at (0,0) (q0) {$q_0$};
		\node[state,fill=green] at (3,0) (q1) {$q_1$};
		\node[state] at (6,0) (q2) {$q_2$};
		\node[state,fill=red] at (0,-3) (q3) {$q_3$};
		\node[state] at (3,-3) (q4) {$q_4$};
		\node[state,fill=red] at (6,-3) (q5) {$q_5$};

		\node[actionnode] at (0,-1.5) (q0d) {};
		\node[actionnode] at (0.75,-0.75) (q0r) {};

		\node[actionnode] at (2,-1.5) (q1d) {};
		\node[actionnode] at (1.5,0.4) (q1l) {};
		\node[actionnode] at (4.5,0.4) (q1r) {};

		\node[actionnode] at (6,-1.5) (q2d) {};
		\node[actionnode] at (5.25,-0.75) (q2l) {};

		\node[actionnode] at (0,-3.7) (q3s) {};

		\node[actionnode] at (1.5,-3) (q4l) {};
		\node[actionnode] at (4.5,-3) (q4r) {};
		\node[actionnode] at (4,-1.5) (q4u) {};

		\node[actionnode] at (6,-3.7) (q5s) {};

		\path[actionedge]
			(q0) edge node[actiontext] {$r_0$} (q0r)
			(q0) edge node[actiontext,swap] {$d_0$} (q0d)

			(q1) edge[in=0,out=150] node[actiontext,swap] {$l_1$} (q1l)
			(q1) edge[in=90,out=-135] node[actiontext] {$d_1$} (q1d)
			(q1) edge[out=30,in=180] node[actiontext] {$r_1$} (q1r)

			(q2) edge node[actiontext,swap] {$l_2$} (q2l)
			(q2) edge node[actiontext] {$d_2$} (q2d)

			(q3) edge[out=-60,in=0] node[actiontext] {$s_3$} (q3s)

			(q4) edge node[actiontext] {$l_4$} (q4l)
			(q4) edge[in=-90,out=45] node[actiontext] {$u_4$} (q4u)
			(q4) edge node[actiontext,swap] {$r_4$} (q4r)

			(q5) edge[out=-60,in=0] node[actiontext] {$s_5$} (q5s)
		;

		\path[probedge]
			(q0r) edge[sloped,in=180,out=45] node[anchor=south,probtext] {$[0.1,0.5]$} (q1)
			(q0r) edge[sloped] node[anchor=north,probtext] {$[0.3,0.6]$} (q4)
			(q0r) edge[sloped,in=90,out=225,pos=0.4] node[anchor=north,probtext] {$[0.3,0.6]$} (q3)

			(q0d) edge[thick,dashed] node[probtext] {} (q3)

			(q1l) edge[out=180,in=30,thick,dashed] node[probtext] {} (q0)

			(q1r) edge[out=0,in=150,thick,dashed] node[probtext] {} (q2)

			(q1d) edge[out=-90,in=135,thick,dashed] node[probtext] {} (q4)

			(q2l) edge[sloped,in=0,out=135] node[anchor=south,probtext] {$[0.1,0.5]$} (q1)
			(q2l) edge[sloped] node[anchor=north,probtext] {$[0.3,0.6]$} (q4)
			(q2l) edge[sloped,in=90,out=315,pos=0.4] node[anchor=north,probtext] {$[0.3,0.6]$} (q5)

			(q2d) edge[thick,dashed] node[probtext] {} (q5)

			(q3s) edge[out=-180,in=-120,thick,dashed] node[probtext] {} (q3)

			(q4l) edge[thick,dashed] node[probtext] {} (q3)

			(q4u) edge[out=100,in=90,sloped,pos=0.6,looseness=1.5] node[anchor=south,probtext] {$[0.4,1.0]$} (q4)
			(q4u) edge[out=90,in=-45,sloped] node[anchor=south,probtext] {$[0.0,0.6]$} (q1)

			(q4r) edge[thick,dashed] node[probtext] {} (q5)

			(q5s) edge[out=-180,in=-120,thick,dashed] node[probtext] {} (q5)
		;

		\node[anchor=west] at (-0.5,-4.25) {$\RabinAcc_1 = \{ \rabinpair{ \colorchar{red} }{ \colorchar{green} } \}$};
		\node[anchor=east] at (6.5,-4.25) {$\RabinAcc_2 = \{ \rabinpair{ \emptyset }{ \colorchar{green} }, \rabinpair{ \colorchar{red}\colorchar{green} }{ \colorchar{white} } \}$};
	\end{tikzpicture}
%
%
%
%
	\caption[]{Graphic representation of our extension of the case study from \cite{DC18}.
		A robot navigates through a grid of six states according to the BMDP, using the actions \textbf{u}p, \textbf{d}own, \textbf{l}eft, and \textbf{r}ight, where enabled.
		For readability, all $[1.0, 1.0]$ constraints are omitted and the corresponding edges are drawn dashed.
		In our analysis, we consider two different acceptance conditions $\RabinAcc_1$ and $\RabinAcc_2$.
	} \label{fig:cs}
\end{figure}

We extend the case study from \cite{DC18}, where an agent moves on a coloured grid of six states.
We added several actions in each state, modelling a robot which navigates the grid as depicted in \cref{fig:cs}.
It can choose between going down, up, left or right; it cannot leave the grid, e.g., in $q_0$ only down and right are available.
The robot is pulled towards the red states $q_3$ and $q_5$, e.g. when moving right from $q_0$ there is some chance to be pulled down onto $q_4$ or even $q_3$.
The strength of the pulling force and hence also the probability distribution over the successor states is unknown and hence modelled by uncertainty intervals.

As a first example, we calculate the probability to reach state $q_2$ starting from $q_0$.
From $q_1$ we can surely reach $q_2$ and from $q_3$ and $q_5$ there is no possibility of reaching $q_2$.
From $q_4$, it depends on the instantiation of the BMDP.
The probability to remain in $q_4$ may equal $1$, preventing the robot from reaching $q_2$ when it is in $q_4$.
On the other hand, we might have $\mdpTransitions(q_4, u_4, q_1) = 0.6$ and $q_1$ can almost surely be reached from $q_4$.
Consequently, $q_2$ can be reached from $q_4$ with probability $1$.
Finally, the best strategy in $q_0$ is to go right, as otherwise the robot is immediately stuck in $q_3$.
Together, this gives us a lower bound of $0.1$ and an upper bound of $0.7$.

\begin{table}[t]
	\caption{Results for the BMDP in \cref{fig:cs}} \label{fig:results}
	\centering
	\begin{tabular}{rrr}
		& $\problower$ & $\probupper$ \\
		\toprule
		$\RabinAcc_1$ & $0.0$ & $0.7$ \\
		$\RabinAcc_2$ & $0.4$ & $0.7$ \\
		\bottomrule
	\end{tabular}
\end{table}

Now we consider the properties from \cite{DC18}, adjusted to our setting as depicted in \cref{fig:cs}.
The resulting probabilities are given in \cref{fig:results} and explained below.

$\RabinAcc_1$ corresponds to \enquote{The agent visits a green state infinitely often while visiting red states finitely often}.
In the antagonistic interpretation, the probability of satisfying this property is zero, since there is no MEC containing $q_1$.
Intuitively, actions $r_0$, $d_1$ and $l_2$ all have a positive probability of moving to the bottom row, where we may be forced to remain forever.
For the upper bound, observe that by setting $\mdpTransitions(q_4, u_4, q_1) = 0.6$ we obtain the winning MEC $(\{q_1,q_4\}, \{d_1,u_4\})$, which can be reached with probability $0.7$.

$\RabinAcc_2$ corresponds to \enquote{The agent visits a red state infinitely often only if it visits a green state infinitely often}.
Observe that in this case both $q_4$ and $q_1$ are winning in any consistent MDP, as playing actions $d_1$ and $u_4$ always leads to a winning path.
In contrast to $\RabinAcc_1$, remaining in $q_4$ is winning by the second pair of $\RabinAcc_2$, since only white states are visited.
We thus get a lower and upper bound of $0.4$ and $0.7$, respectively, by computing the probability to reach $q_1$ or $q_4$.

\section{Conclusion} \label{sec:conclusion}

We have presented a solution to the open problem of bounding the probabilities to satisfy an $\omega$-regular property on a bounded-parameter Markov systems.
A~different perspective on previous approaches enabled us to solve the problem by analysis of $\omega$-regular stochastic games.
Future work includes applications of our approach to more general settings such as MDPIP, as well as a practical implementation.
For the latter, we plan to apply approaches based on learning and real-time dynamic programming, see e.g.~\cite{FDB14}.

\bibliographystyle{plain}
\bibliography{main}

\section{Appendix -- The product construction} \label{sec:appendix}

In this section, we briefly explain how linear objectives usually are specified and how they are model checked using the product construction.
First, we define the concept of \emph{$\omega$-regular languages} and \emph{automata}.
Fix a finite \emph{alphabet} $\Alphabet$.
Elements of $\Alphabet^\omega$ are called \emph{words} and sets of words $\Language \subseteq \Alphabet^\omega$ are called \emph{languages}.
Such a language is $\omega$-regular if it can be \emph{recognized} by an automaton.

\subsection{Automata \& regular languages}

\begin{definition}
	A \emph{deterministic Rabin automaton (DRA)} is a tuple $\automaton = (\automatonStates, \automatonTransitions, \automatonInitialState, \RabinAcc)$, where
		$\automatonStates$ is a finite set of \emph{states},
		$\automatonTransitions : \automatonStates \times \Alphabet \to \automatonStates$ is a \emph{transition function}\footnote{Recall that the alphabet $\Alphabet$ is already fixed.},
		$\automatonInitialState \in \automatonStates$ is an \emph{initial state}, and
		$\RabinAcc \subseteq 2^\automatonStates \times 2^\automatonStates$ is the \emph{Rabin condition}.
	An element $(\RabinFin_i, \RabinInf_i) \in \RabinAcc$ is called \emph{Rabin pair}.
	We assume w.l.o.g. that $\RabinFin_i \intersection \RabinInf_i = \emptyset$.
\end{definition}
Let $\automaton$ be a DRA and $w \in \Alphabet^\omega$ a word.
A word $w$ induces a \emph{run}, i.e. a sequence of states $\automaton(w) = \automatonInitialState q_1 q_2 \cdots \in \automatonStates^\omega$, where $q_{i+1} = \automatonTransitions(q_i, w_i)$.
As with MDP, let $\Inf(w)$ denote the set of states occurring \emph{infinitely often} on the run $\automaton(w)$.
A word is accepted by the automaton, denoted $w \accepted \automaton$, if there exists a Rabin pair $(\RabinFin_i, \RabinInf_i) \in \RabinAcc$ with $\RabinFin_i \intersection \Inf(w) = \emptyset$ and $\RabinInf_i \intersection \Inf(w) \neq \emptyset$.
Such a Rabin pair is called \emph{accepting for $w$}.

\begin{figure}
	\centering
	\begin{tikzpicture}[auto,initial text=]
		\node[automatonstate,initial above] at (0,0) (s0) {$s_0$};
		\node[automatonstate] at (2,0) (s1) {$s_1$};

		\path[->]
			(s0) edge[loop left] node {$z$} (s0)
			(s0) edge[bend left] node {$x, y$} (s1)
			(s1) edge[loop right] node {$y$} (s1)
			(s1) edge[bend left] node {$x, z$} (s0)
		;
	\end{tikzpicture}
	\caption[]{Example DRA with acceptance $\RabinAcc = \{\rabinpair{s_0}{s_1}, \rabinpair{s_1}{s_0}\}$.} \label{fig:model_examples:dra}
\end{figure}

A language $\Language \subseteq \Alphabet^\omega$ is called \emph{$\omega$-regular} if and only if there exists a DRA $\automaton$ recognizing $\Language$, i.e. some word $w \in \Alphabet^\omega$ is accepted by the automaton if and only if it is in $\Language$.
See \cref{fig:model_examples:dra} for an example of a DRA recognizing the language \enquote{eventually only $y$ or eventually only $z$}.

\begin{remark}
	A wide variety of specifications are $\omega$-regular.
	For example, reachability and liveness constraints can easily be translated to an automaton.
	Moreover, the whole of linear temporal logic is expressible through Rabin automata and efficient translations from LTL to Rabin automata exist \cite{DBLP:conf/lics/EsparzaKS18}.
\end{remark}

\subsection{Labelled MDPs \& product}

For the product construction, we modify the definition of MDPs by replacing the acceptance by a labelling function $\labelling : \States \to \Alphabet$, assigning to each state of the MDP a letter.
We are given such a labelled MDP and a Rabin automaton.
We construct the product by tracking both the evolution of the MDP and the automaton, where the automaton progresses based on the letter assigned to the current state.

\begin{definition} \label{def:product}
	Let $\MDP = (\States, \initialstate, \Actions, \stateactions, \mdpTransitions, \labelling)$ be a labelled MDP and $\automaton = (\automatonStates, \automatonTransitions, \automatonInitialState, \RabinAcc)$ a Rabin automaton.
	The \emph{product} $\MDP \otimes \automaton = (\States \times \automatonStates, (\initialstate, \automatonInitialState), \Actions, \stateactions', \mdpTransitions', \RabinAcc')$ is an MDP where
		$\stateactions'((s, q)) := \stateactions(s)$,
		$\mdpTransitions((s, q), a, (s', q')) := \mdpTransitions(s, a, s')$ if $q' = \automatonTransitions(q, \labelling(s))$ and $0$ otherwise, and
		$\RabinAcc' = \{(\RabinFin_i \times \States, \RabinInf_i \times \States) \mid (\RabinFin_i, \RabinInf_i) \in \RabinAcc\}$.
\end{definition}
We analogously define this product construction for BMDP.
Observe that the product is now of the form as we defined it in \cref{def:product}.
Applying our methods to this product yields a solution for the original system.

\subsection{Caveat} \label{sec:appendix:semantics}

\begin{figure}
	\centering
	\subfloat[][Example DRA with acceptance $\RabinAcc = \{\rabinpair{q_2}{q_0, q_1}\}$]{ \label{fig:product:dra}
		\begin{tikzpicture}[auto,initial text=]
			\node[automatonstate,initial left] at (0,0) (q0) {$q_0$};
			\node[automatonstate] at (2,0) (q1) {$q_1$};
			\node[automatonstate] at (4,0) (q2) {$q_2$};
			\node[automatonstate] at (6,0) (q3) {$q_3$};
			\node[automatonstate] at (3,-1.5) (e) {$q_4$};
	
			\path[->]
				(q0) edge node {$x$} (q1)
				(q1) edge node {$y$} (q2)
				(q2) edge node {$x$} (q3)
				(q3) edge[bend right,swap] node {$z$} (q0)

				(q0) edge[sloped,anchor=north] node {$y, z$} (e)
				(q1) edge[sloped,anchor=south] node {$x, z$} (e)
				(q2) edge[sloped,anchor=south] node {$y, z$} (e)
				(q3) edge[sloped,anchor=north] node {$x, y$} (e)

				(e) edge[loop left] node {$x, y, z$} (e)
			;
		\end{tikzpicture}
	} \\
	\subfloat[Example IMC.]{  \label{fig:product:imc}
		\begin{tikzpicture}[auto,initial text=]
			\node[state,initial above] at (0,0) (s0) {$s_0, x$};
			\node[state] at (2,0) (sr) {$s_2, z$};
			\node[state] at (-2,0) (sl) {$s_1, y$};

			\path[probedge]
				(s0) edge[bend left] node[probtext] {$[0.0, 1.0]$} (sr)
				(s0) edge[bend left] node[probtext] {$[0.0, 1.0]$} (sl)
				(sl) edge[bend left] node[probtext] {$[1.0, 1.0]$} (s0)
				(sr) edge[bend left] node[probtext] {$[1.0, 1.0]$} (s0)
			;
		\end{tikzpicture}
	}
	\caption{Uncertain/adversarial interval interpretations are not equal.} \label{fig:product}
\end{figure}
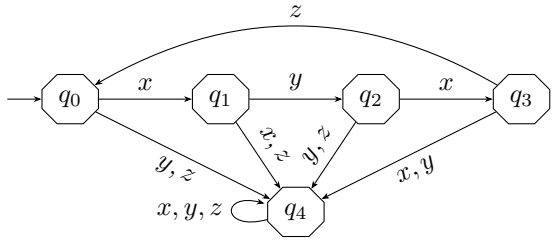
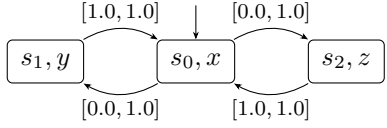

Since this construction modifies the state space, it is not obvious how optimal policies on the product relate to policies on the original MDP.
Indeed, while a memoryless policy may be optimal on the product, it might be the case that finite memory is needed to behave optimally in the given system.
Moreover, it is the case that for $\omega$-regular objectives, the optimal values for the uncertainty and the antagonistic interpretation are not equal already for IMCs.

Fix the alphabet $\Alphabet = \{x, y, z\}$ and consider the language $\Language = \{(xyxz)^\omega\}$, i.e. a language containing a single word $w$ which repeats the string $xyxz$ indefinitely.
This language is regular and is recognized by the automaton shown in \cref{fig:product:dra}.
Consider now the IMC in \cref{fig:product:imc}.
Clearly, the probability of satisfying the property is zero under the uncertainty interpretation -- any MC consistent with the IMC eventually violates the structure of the property with probability $1$.
On the other hand, interpreted as an IMDP, the transitions can be chosen such that the required word is always produced.

\end{document}